%
%
%
%
%

\documentclass[11pt, reqno]{jmcs-t}

\usepackage{amsmath, amsthm, amscd, amsfonts, amssymb, graphicx, color}
\usepackage[backref,colorlinks=true]{hyperref}
\usepackage{subfig}
\usepackage{float}
\usepackage{graphics}
\usepackage{adjustbox}
\usepackage{booktabs}
\begin{document}

\title{Constructing Segmented Differentiable Quadratics to Determine Algorithmic Run Times and Model Non-Polynomial Functions}

\author[a1]{Ananth Goyal\corref{c1}}
\ead[c1]{dh.agoyal3@students.srvusd.net}



\address[a1]{Dougherty Valley High School}


\cortext[c1]{Corresponding author}
\setcounter{page}{1}
\vol{? (201?)} \pages{1--?}

\recivedat{received date}

\authors{A. Goyal}

\doi{\href{}}

\begin{abstract}
We propose an approach to determine the continual progression of algorithmic efficiency, as an alternative to standard calculations of time complexity, likely, but not exclusively, when dealing with data structures with unknown maximum indexes and with algorithms that are dependent on multiple variables apart from just input size. The proposed method can effectively determine the run time behavior $F$ at any given index $x$ , as well as $\frac{\partial F}{\partial x}$, as a function of only one or multiple arguments, by combining $\frac{n}{2}$ quadratic segments, based upon the principles of Lagrangian Polynomials and their respective secant lines. Although the approach used is designed for analyzing the efficacy of computational algorithms, the proposed method can be used within the pure mathematical field as a novel way to construct non-polynomial functions, such as $\log_2{n}$ or $\frac{n+1}{n-2}$, as a series of segmented differentiable quadratics to model functional behavior and reoccurring natural patterns. After testing, our method had an average accuracy of above of 99\% with regard to functional resemblance.
\begin{keyword}Time Complexity \sep Algorithmic Run Time \sep Polynomials \sep Lagrangian Interpolation
\end{keyword}
\end{abstract}

\maketitle



\newtheorem{theorem}{Theorem}[section]
\newtheorem{lemma}[theorem]{Lemma}
\newtheorem{proposition}[theorem]{Proposition}
\newtheorem{corollary}[theorem]{Corollary}
\newtheorem{question}[theorem]{Question}

\theoremstyle{definition}
\newtheorem{definition}[theorem]{Definition}
\newtheorem{algorithm}[theorem]{Algorithm}
\newtheorem{conclusion}[theorem]{Conclusion}
\newtheorem{problem}[theorem]{Problem}

\theoremstyle{remark}
\newtheorem{remark}[theorem]{Remark}
\numberwithin{equation}{section}

\section{Introduction}
Runtime, and it's theoretical subset, time complexity, are imperative to understanding the speed and continual efficiency of all algorithms\cite{Nasar}\cite{Aho}. Particularly because runtime information allows for thorough comparisons between the performance of competing approaches. Due to the varying environments in which algorithms are executed, time complexity is implemented as a function of inputted arguments\cite{Sipser} rather than accounting for the situational execution time\cite{Pusch}; this removes the need to address every extraneous factor that affects the speed of such algorithms\cite{Dean}. There are countless methods on determining the formulaic runtime complexity\cite{Qi}\cite{Guzman}, particularly because, from a theoretical perspective, the true runtime can never be determined without thoroughly examining the algorithm itself\cite{ullman}; however, this does not mean that the process cannot be expedited, simplified, or made easier.

The goal is to produce a function $\mathcal{O}(T(n))$ that can model the time complexity of any given algorithm\cite{Mohr}, primarily who's runtime is defined as a function of more than just a single variable. We define $E(foo(args))$ where $foo(args)$ is any given algorithm and $E$ denotes the execution in a controlled environment. The following method can be used to determine run time with respect to a several variables (not just element size) by evaluating CPU time with respect to an input size. Any confounding variables such as CPU type, computing power, and/or programming language, will be bypassed as they will remain controlled during testing. The constructed polynomial series, which will be a piece-wise of segmented quadratics, will then produce the same functional asymptotic behavior as the true time complexity $\mathcal{O}(T(n))$, which can then be independently determined through the correlation with their respective parent functions. In addition, the methods found for computing such runtimes has profound mathematical implications for representing various non-polynomial functions as differentiable quadratic segments, similar, but not identical, to the outcome of evaluating Taylor Series\cite{Jumarie}\cite{Corliss}. In short, we do this by using reference points of any given non-polynomial, and developing a quadratic (using polynomial interpolation\cite{Boor}) over a particular segment that accurately matches the true functional behavior.

\section{Methods}
Our primary condition is the following:  
$$
\exists_{x\in\mathbb{R}} \left[{F(x+c) - f(x) = \int_{x}^{x+c}\frac{\partial f}{\partial x}}\right]
$$
Additionally, 
$$ \forall(n \in \mathbb{R}: n > 0)\exists\frac{\partial}{\partial n}\mathcal{O}(T(n))$$
This ensures that the targeted Time Complexity function must be constructed of only real numbers and be differentiable throughout, except for segemented bounds. It is important to note that $F(x) \neq \mathcal{O}(T(n)) \vee F(x)\not\approx \mathcal{O}(T(n))$. We also define $E(foo(args)) = k\mathcal{O}(T(n))$, where k is any constant of proportionality that converts the predicted time complexity into execution time or vice-versa. 
\subsection{Lagrangian Polynomial Principles}
We first construct a single line of intersection amongst every consecutive ordered pair of segmented indexes and respective computing time (or any alternative performance modeling metric). We use the following standard point slope formula to do so: 
\begin{equation}
y = \frac{y_i - y_{i-1}}{x_{i} - x_{i-1}}(x-x_{i}) + y_i
\end{equation}

The polynomial of any given segment can be constructed using the explicit formula below\cite{sauer}\cite{Rashed}, in this case the first three indexes within a data set are used; however, this applies for any given 3 point segment within the data set. Defined as: $\forall(x \in (x_j, x_k))|(k = j + 2))$. Note: The proof for the following formulas is shown in section 2.5.
\begin{equation}
\forall(x \in (x_0, x_2)): f(x) =  y_{0}{\frac {(x-x_{1})(x-x_{2})}{(x_{0}-x_{1})(x_{0}-x_{2})}} + y_{1}{\frac {(x-x_{0})(x-x_{2})}{(x_{1}-x_{0})(x_{1}-x_{2})}} + y_{2}{\frac {(x-x_{0})(x-x_{1})}{(x_{2}-x_{0})(x_{2}-x_{1})}}
\end{equation}
We then factor in the polynomial model above and the respective secant line equation, to construct the explicit average form of the initial 3 point segment such that each point is equivalent to the difference between the secant line and the original polynomial. 
\newline
\\
$
\forall(x \in (x_0, x_2)): f(x) =  $
\begin{equation}
y_{0}{\frac {(x-x_{1})(x-x_{2})}{(x_{0}-x_{1})(x_{0}-x_{2})}} + y_{1}{\frac {(x-x_{0})(x-x_{2})}{(x_{1}-x_{0})(x_{1}-x_{2})}} + y_{2}{\frac {(x-x_{0})(x-x_{1})}{(x_{2}-x_{0})(x_{2}-x_{1})}} + 
\left[\frac{y_i - y_{i-1}}{x_{i} - x_{i-1}}(x-x_{i}) + y_i\right]| (i = (1 \vee 2))
\end{equation}
Before we implement this method, we must account for any given segment, and to do so, we must  simplify the method of polynomial construction. First we define F(x) to be dependent on our $f_j$ outputs. 
\begin{equation}
 F(x):=\sum _{j=0}^{k}y_{j}f _{j}(x)\end{equation}
These outputs are determined accordingly (Note: k = 3 in our case; however, the model would work for any value of k): 
\begin{equation}
f _{j}(x):=\prod _{\begin{smallmatrix}0\leq m\leq k\\m\neq j\end{smallmatrix}}{\frac {x-x_{m}}{x_{j}-x_{m}}}={\frac {(x-x_{0})}{(x_{j}-x_{0})}}\cdots {\frac {(x-x_{j-1})}{(x_{j}-x_{j-1})}}{\frac {(x-x_{j+1})}{(x_{j}-x_{j+1})}}\cdots {\frac {(x-x_{k})}{(x_{j}-x_{k})}} \end{equation}

Such that,
\begin{equation}
{\displaystyle \forall ({j\neq i}):f_{j}(x_{i})=\prod _{m\neq j}{\frac {x_{i}-x_{m}}{x_{j}-x_{m}}}=0}
\end{equation}
\subsection{Estimating $\mathcal{O}(T(n))$ as a Function of Quadratic Segments}
We can then average this with the constructed Lagrangian polynomial to get our model for any given 3-point segment. Note: $\because (F(x_k) = F(x_k)) \wedge (\lim_{x \to k^{-}} \frac{\partial F}{\partial x} \neq \lim_{x \to k^{+}} \frac{\partial F}{\partial x}) \therefore \nexists (\frac{\partial F}{\partial x}|_{x = k})$ We can simplify the given expression to\cite{Berrut}: 
\begin{equation}
\forall(x \in (x_j, x_k)): F(x) =  \frac{1}{2}\sum _{j}^{k=j+3}y_{j}\prod _{\begin{smallmatrix}0\leq m\leq k\\m\neq j\end{smallmatrix}}{\frac {x-x_{m}}{x_{j}-x_{m}}} + \frac{1}{2}\left[ \frac{y_{k} - y_{k-1}}{x_{k} - x_{k-1}}(x-x_{k}) + y_k \right]
\end{equation}
Such that,
\begin{equation}
\left.\frac{\partial}{\partial x}\left[\sum _{j}^{k=j+2}y_{j}\prod _{\begin{smallmatrix}0\leq m\leq k\\m\neq j\end{smallmatrix}}{\frac {x-x_{m}}{x_{j}-x_{m}}} + \left[ \frac{y_{k} - y_{k-1}}{x_{k} - x_{k-1}}(x-x_{k}) + y_k \right]\right]\right|_{x = x_{j+1}}\approx \left.\frac{\partial}{\partial x} (\frac{1}{k})E(foo(x_{j+1}))\right|_{x = x_{j+1.5}}\end{equation}
as well as the segmented average\cite{Comenetz}, 
\begin{equation}
\frac{1}{2x_{k} - 2x_{j+1}}\int_{x_j}^{x_k}\left[\sum _{j}^{k=j+2}y_{j}\prod _{\begin{smallmatrix}0\leq m\leq k\\m\neq j\end{smallmatrix}}{\frac {x-x_{m}}{x_{j}-x_{m}}} + \left[ \frac{y_{k} - y_{k-1}}{x_{k} - x_{k-1}}(x-x_{k}) + y_k \right]\right] \approx \int_{x_{j+1}}^{x_k}(\frac{1}{k})E(foo(n))\end{equation}
We then implement the proposed method of each selected segment to construct the function for every iteration of natural numbers by redefining $F(x)$ from a single constructed polynomial to a multi-layered, piece-wise construction of the primary segments of such polynomials. 
\begin{equation}
\forall(x \in \mathbb{R} : x > 0)):  F(x) = \begin{cases} 
       \frac{1}{2}\sum _{0}^{2}y_{j}\prod _{\begin{smallmatrix}0\leq m\leq 2\\m\neq 0\end{smallmatrix}}{\frac {x-x_{m}}{x_{0}-x_{m}}} + \frac{1}{2}\left[\frac{y_2 - y_{1}}{x_{2} - x_{1}}(x-x_{2}) + y_2\right] & x_1 \leq x \leq x_2 \\
      \frac{1}{2}\sum _{1}^{3}y_{j}\prod _{\begin{smallmatrix}2\leq m\leq 4\\m\neq 2\end{smallmatrix}}{\frac {x-x_{m}}{x_{1}-x_{m}}} + \frac{1}{2}\left[\frac{y_3 - y_{2}}{x_{3} - x_{2}}(x-x_{3}) + y_4\right] & x_2 \leq x \leq x_3 \\
        \cdots &   \cdots  \\
    \frac{1}{2}\sum _{n-2}^{n}y_{j}\prod _{\begin{smallmatrix}n-2\leq m\leq n\\m\neq n-2\end{smallmatrix}}{\frac {x-x_{m}}{x_{n-2}-x_{m}}} + \frac{1}{2}\left[\frac{y_n - y_{n-1}}{x_{n} - x_{n-1}}(x-x_{n}) + y_n\right] & x_{n-1} \leq x \leq x_n \\
   \end{cases}
\end{equation}
In order to retrieve the complexity of the algorithm at a particular index $i$ we can now simply compute $F(i)$. Note: $ \nexists\left.\frac{\partial F}{\partial x}\right|_{x = {x_{j}}\vee {x_{k}}}$ but $\forall(x \in (x_j, x_k)\exists\left.\frac{\partial F}{\partial x}\right|_{x}$. Additionally, however, the proposed method, when graphed, will construct a continuous function, making it easy to determine the true runtime of the function as $\mathcal{O}T(n)). $ 
\subsection{Evaluating Quadratic Segments of Multi-variable Algorithms}
\subsubsection{Run Times with Non-Composite Operations}
The following method will suffice if, and only if, the arguments are not directly correlated through any mathematical operation, excluding addition, subtraction, or any non-composite operation. For example, if our unknown time complexity of Algorithm $foo(x, b)$ was $\mathcal{O}(\log_2(x) + b)$. We must first evaluate the execution time with respect to a single variable. We use $E(foo())$, to denote the execution time of the given function; this can be determined by implementing a computing timer into the algorithm.  In this case we evaluate the algorithm accordingly:
\begin{equation}
Y_0 = E(foo(x, b))|\{(x \in \mathbb{N}: x > 0)\wedge(b = 0)\}
\end{equation}

Such that,
\begin{equation}
Y_0 = {y_{0_{0}}\vee{}E(foo(x_0, 0)), y_{0_{1}}\vee{}E(foo(x_1, 0)), \cdots, y_{0_n}\vee{}E(foo(x_n, 0))}\end{equation}
And the same for the other argument: 
\begin{equation}
X_0 = E(foo(x, b))|\{(b \in \mathbb{N}: b > 0)\wedge(x = 0)\}
\end{equation}
Such that,
\begin{equation}
X_0 = {\chi_{0_{0}}\vee{}E(foo(0, b_0)), \chi_{0_{1}}\vee{}E(foo(0, b_0)), \cdots, \chi_{0_n}\vee{}E(foo(0, b_n))}\end{equation}
In this particular case, we first isolate the $F(x, b)$ in terms of x. To do so we must first ensure $x$ and $b$ are independent of each other. Since, in our sample scenario,
\begin{equation}
E(foo(x, b)) = \log_2(x) + b
\end{equation}
Now, we can conclude that, 
\begin{equation}
E(foo(x, 0)) = \log_2(x) + 0 = \log_2(x)
\end{equation}
And, 
\begin{equation}
E(foo(0\vee(\forall \in\mathbb{R} > 0), b)) = \log_2(0\vee(\forall \in\mathbb{R} > 0)) + b
\end{equation}
Now, we can evaluate the $E(foo(x, b))$ over a set of fixed data points. First with respect to x: 
\begin{equation}
F(x, 0)\vee F_x(x, b) =  \frac{1}{2}\sum _{j}^{k=j+3}y_{0_{j}}\prod _{\begin{smallmatrix}0\leq m\leq k\\m\neq j\end{smallmatrix}}{\frac {x-x_{m}}{x_{j}-x_{m}}} + \frac{1}{2}\left[ \frac{y_{0_{k}} - y_{0_{k-1}}}{x_{k} - x_{k-1}}(x-x_{k}) + y_{0_{k}} \right]
\end{equation}
Then with respect to b:
\begin{equation}
F(0, b)\vee F_b(x, b) =  \frac{1}{2}\sum _{j}^{k=j+3}\chi_{0_{j}}\prod _{\begin{smallmatrix}0\leq m\leq k\\m\neq j\end{smallmatrix}}{\frac {b-b_{m}}{b_{j}-b_{m}}} + \frac{1}{2}\left[ \frac{\chi_{0_{k}} - \chi_{0_{k-1}}}{b_{k} - b_{k-1}}(b-b_{k}) + \chi_{0_{k}} \right]
\end{equation}

Once we have computed our segmented quadratics with respect a particular index group, we can construct our piece-wise function of $E(foo(x,b))= \log_2(x) + b$ as two independent, graphical representations. 
\begin{equation}
F(x,b) = \begin{cases}
\forall(x > 0)):F_x= \begin{cases} 
       \frac{1}{2}\sum _{0}^{2}y_{j}\prod _{\begin{smallmatrix}0\leq m\leq 2\\m\neq 0\end{smallmatrix}}{\frac {x-x_{m}}{x_{0}-x_{m}}} + \frac{1}{2}\left[\frac{y_2 - y_{1}}{x_{2} - x_{1}}(x-x_{2}) + y_2\right] & x_1 \leq x \leq x_2 \\
        \cdots &   \cdots  \\
    \frac{1}{2}\sum _{n-2}^{n}y_{j}\prod _{\begin{smallmatrix}n-2\leq m\leq n\\m\neq n-2\end{smallmatrix}}{\frac {x-x_{m}}{x_{n-2}-x_{m}}} + \frac{1}{2}\left[\frac{y_n - y_{n-1}}{x_{n} - x_{n-1}}(x-x_{n}) + y_n\right] & x_{n-1} \leq x \leq x_n \\
   \end{cases}\\
\forall(b > 0)):F_b= \begin{cases}
       \frac{1}{2}\sum _{0}^{2}\chi_{j}\prod _{\begin{smallmatrix}0\leq m\leq 2\\m\neq 0\end{smallmatrix}}{\frac {x-x_{m}}{b_{0}-b_{m}}} + \frac{1}{2}\left[\frac{\chi_2 - \chi_{1}}{b_{2} - b_{1}}(b-b_{2}) + \chi_2\right] & b_0 \leq x \leq b_2 \\
        \cdots &   \cdots  \\
    \frac{1}{2}\sum _{n-2}^{n}\chi_{j}\prod _{\begin{smallmatrix}n-2\leq m\leq n\\m\neq n-2\end{smallmatrix}}{\frac {b-b_{m}}{b_{n-2}-b_{m}}} + \frac{1}{2}\left[\frac{\chi_n - \chi_{n-1}}{b_{n} - b_{n-1}}(b-b_{n}) + y_n\right] & b_{n-1} \leq b \leq b_n \\
   \end{cases}
 \end{cases}
\end{equation}
Although our method produces non-differentiable points at segmented bounds, we can still compute partial derivatives at points $\forall(x\in \mathbb{R}: x > 0)$ such as: 
\begin{equation}
(\left.\frac{\partial}{\partial x})\frac{1}{2}\sum _{j}^{k=j+3}y_{0_{j}}\prod _{\begin{smallmatrix}0\leq m\leq k\\m\neq j\end{smallmatrix}}{\frac {x-x_{m}}{x_{j}-x_{m}}} + \frac{1}{2}\left[ \frac{y_{0_{k}} - y_{0_{k-1}}}{x_{k} - x_{k-1}}(x-x_{k}) + y_{0_{k}} \right]\right|_{x = x_j + 1}\approx (\left.\frac{\partial}{\partial x})(\log_2x + b)\right|_{x = x_j + 1} 
\end{equation}
We can justify the accuracy of $F_x$ with $E(foo(x))$ assuming only one inputted argument accordingly: 
\begin{equation}
\because \nexists (\frac{\partial F}{\partial x}|_{x = j}) \wedge \nexists (\frac{\partial F}{\partial x}|_{x = k}) \wedge \because (\lim_{x \to k^{-}} F_x(x) = \lim_{x \to k^{+}} F_x(x) = F_x(x))\end{equation}
We can conclude that:
\begin{equation}
\frac{1}{n}\sum_{\forall(x \in \mathbb{N}: x > x_{k})}^{n}E(foo(x)) \approx\frac{1}{x_{2} - x_{1}}\int_{x_{1}}^{x_{2}}F_x(x) + \frac{1}{x_{3} - x_{2}}\int_{x_{2}}^{x_{3}}F_x(x) + \cdots + \frac{1}{x_{n} - x_{n-1}}\int_{x_{n-1}}^{x_{n}}F_x(x) 
\end{equation}
Alternatively,  
\begin{equation}
\frac{1}{n}\sum_{\forall(x \in \mathbb{N}: x > x_{k})}^{n}E(foo(x)) \approx\sum_{\forall(x \in \mathbb{N}: x > x_{k})}^{n} \frac{1}{x_{k} - x_{j}}\int_{x_{j}}^{x_{k}}F_x(x)
\end{equation}
\subsubsection{Run times with Composite Operations}
In order to explain the approach used in cases with unknown runtime functions that consist of composite operations, we must implement the following proof.
\begin{theorem}
If the unknown run time function consists of composite operations, such as in $M(x,b) = \frac{1}{b}\log_2(x)$, this can be instantly determined if the functional difference across a set of input values is not just a graphical translation. 
\end{theorem}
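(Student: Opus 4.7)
The plan is to prove the statement via its contrapositive, by formalizing what ``graphical translation'' means and then connecting it to additive separability of $M$. First I would make the notion precise: for fixed values $b_1, b_2$, the cross-sections $x \mapsto M(x, b_1)$ and $x \mapsto M(x, b_2)$ differ by a pure graphical translation exactly when
\[
M(x, b_1) - M(x, b_2) = c(b_1, b_2) \quad \text{for all admissible } x,
\]
that is, when the $x$-wise difference is independent of $x$.

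The core step is to show that this translation-invariance across cross-sections is equivalent to $M$ being additively separable, $M(x,b) = f(x) + g(b)$. The forward direction is immediate: if $M(x,b) = f(x) + g(b)$, then $M(x,b_1) - M(x,b_2) = g(b_1) - g(b_2)$, which is independent of $x$. For the converse, fix a reference pair $(x_0, b_0)$ and set $f(x) := M(x, b_0)$ and $g(b) := M(x_0, b) - M(x_0, b_0)$. Applying the translation condition with $(b_1, b_2) = (b, b_0)$ at the two points $x$ and $x_0$ yields
\[
M(x,b) - M(x,b_0) \;=\; M(x_0,b) - M(x_0,b_0),
\]
which rearranges to $M(x,b) = f(x) + g(b)$. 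This places the problem squarely in the non-composite regime already handled in Section 2.3.1, where $F_x$ and $F_b$ can be fit independently.

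The theorem then follows by contrapositive: if the empirical differences $E(foo(x_i, b_1)) - E(foo(x_i, b_2))$ vary with $i$ across the sampled input set, no additive decomposition of $M$ can exist, so $M$ must contain a genuine composition of its arguments (such as the product in $\frac{1}{b}\log_2(x)$, whose cross-section differences scale as $(\tfrac{1}{b_1}-\tfrac{1}{b_2})\log_2(x)$ rather than remaining constant). Moreover the test is ``instant'' in the algorithmic sense: it requires only two fixed $b$-values together with a small set of sampled $x_i$, and reduces to checking a single list of pairwise differences for non-constancy, prior to any polynomial fitting.

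The main obstacle I anticipate is not the algebra, which is a routine separation-of-variables argument, but producing a noise-robust version of the criterion. Because $E(foo(\cdot))$ is a timed measurement, the observed differences will never be \emph{exactly} constant even in the additive case, so a rigorous proof needs a stated tolerance $\varepsilon$ on measurement error in $E$, together with an argument that any composite $M$ of interest induces cross-section variation exceeding $\varepsilon$ on the chosen sample set. Handling this ``detectability under noise'' gap is what would turn the clean algebraic equivalence above into an empirically usable decision rule.
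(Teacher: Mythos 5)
Your proof is correct, and it takes a genuinely different and more general route than the paper's. The paper argues by a single worked example: it contrasts $G(x,b)=\log_2(x)+b$ with $M(x,b)=\frac{1}{b}\log_2(x)$ and observes that substituting $b=0$ recovers $\log_2(x)$ exactly for $G$ but fails for $M$ (where that substitution is not even well defined), concluding that for composite operations no constant substitution isolates a single variable's contribution. That is a demonstration on one pair of functions rather than a proof of a general criterion. You instead formalize ``graphical translation'' as $x$-independence of the cross-section differences $M(x,b_1)-M(x,b_2)$ and prove the clean equivalence with additive separability $M(x,b)=f(x)+g(b)$ via the standard fix-a-reference-point argument, after which the theorem follows by contrapositive. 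Your version buys (i) a necessary-and-sufficient characterization rather than an illustration on one example, (ii) independence from the $b=0$ substitution, which is ill-posed for the paper's own motivating function $M$, and (iii) an explicit finite test (two $b$-values, a sample of $x_i$, one list of differences checked for constancy) that actually justifies the word ``instantly'' in the statement. Your closing observation about measurement noise names a real gap that the paper's proof does not acknowledge at all; neither argument closes it, but you correctly identify the $\varepsilon$-tolerance condition that would be needed to turn the algebraic equivalence into an empirically usable decision rule.
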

\begin{proof}[Proof of Theorem 2.1]
If, 
\begin{equation}G(x, b) = \log_2(x) + b \wedge M(x, b) = \frac{1}{b}\log_2(x)\end{equation}
Then, 
\begin{equation}G(x, 0) = \log_2(x) + 0 = \log_2(x) = G_x(x, b) \vee G(x)\end{equation}
Additionally, 
\begin{equation}G(x, 0) = G_x(x, b)\vee G(x)\end{equation}
But, 
\begin{equation}M(x, 0) \neq M_x(x, b)\vee M(x)\end{equation}
Due to the non-composite operations of $G$, the value of $b$ does not directly impact the value of $x$, rather just the output of the mulitvariable function. The same can be done conversely with other variables; however, if they are directly correlated, such as in $M(x,b)$ it prevents the difference from being just a translation.  
\begin{equation}M(x, 0\vee(\forall(b \in \mathbb{R}: b > 0))) \neq \log_2(x)\end{equation}
Above, it is clear that both independent variables cannot be determined through inputting a constant of 0, causing a non-linear intervariable relationship. 
\end{proof}
In order to construct the primary segmented function with equivalent behavior to the multivariable runtime $E(foo(x,b))$ or with any number of arguments, we must run execution tests with respect to each variable such that the remaining are treated as constants. If, like in the example stated earlier, the unknown runtime function was $\frac{1}{b}\log_2x$, then when graphically modeled for $n$ number of tests in terms of $x$, a set of skewed logarithmic curves would be constructed, where as with respect to $y$, a set of hyperbolic functions would be produced. By treating each temporary, non-functional, constant argument as $k_n$ the graphical differences can be factored in, when creating the single time complexity formula.
Although there are several $k$ values that can be used, to keep the methodology consistent, we decide to take the input value that produces the average functional value over a given segment as the select constant. Although the true function is unknown, we can use the constructed, segmented quadratic to do so. 
\begin{equation}
    \frac{1}{\delta_k - \delta_j}\int_{\delta_j}^{\delta_k}F_\delta(\delta_x, k_b, \cdots, k_z)\partial \delta = a\delta^2 + b\delta + c
\end{equation}
Then we can simply solve for the value of $\delta$ accordingly\cite{Irving}, such that $(\delta\in\mathbb{R})\wedge(\delta > 0)$:
\begin{equation}
   \delta = \frac{-b\pm\sqrt{b^2 - 4a(c - \frac{1}{\delta_k - \delta_j}\int_{\delta_j}^{\delta_k}F_\delta(\delta_x, k_b, \cdots, k_z)\partial \delta)}}{2a}
\end{equation}
While this process is still comprehensible, once we start introducing functions with more than 2 arguments, we must test their values in planes with multiple dimensional layers, rather than just one or two dimensions. To do so, we determine the intervariable relationships between every potential pair of arguments and construct a potential runtime formula accordingly. Note: The higher dimensional order of the function, the more convoluted the formulaic determination becomes.

Suppose the intervariable runtime function $E(foo(x,b,c, \cdots, z))$ such that the corresponding segmented quadratic function is $F(x,b,c, \cdots, z)$. We would evaluate the unknown $E(foo(x,b,\cdots, z))$ with respect to a single variable such that the remaining are treated as constants. Using the example above, we would first plug in constant values into $x$ and $b$, while graphically modeling the rate of change of $c$ as an independent function. Then we begin to adjust $b$ with increments of $i$ to determine, their respective transformational relationship. We would repeat this process for every potential pair $(x,b)$, $(x,c)$, $(b,c)$, and so forth.
\begin{equation}
    F_{x, b}(x, b, \cdots, z) = F_{x}(x, \forall(b\in\mathbb{R}: b > 0: b=b+i), \cdots, k_z)
\end{equation}
\begin{equation}
    F_{b, c}(x, b, \cdots, z) = F_{b}(k_x, b,\forall(c\in\mathbb{R}: c > 0: c=c+i), \cdots, k_z)
\end{equation}
$$
\cdots
$$
\begin{equation}
    F_{c, z}(x, b, \cdots, z) = F_{b}(k_x, k_b, c, \forall(z\in\mathbb{R}: z > 0: z=z+i), \cdots, k_z)
\end{equation}
From their we can use the graphical model to help deduce the formulaic runtime with respect to all variables. 

Similar to the analysis method with respect to a single variable, we can justify the accuracy by approximately equating the average integrated value of each independent segment with it's true, algorithmic counterpart: Note: We define $l$ as the total number of input arguments. 
$$
\frac{1}{(n)(l)}\sum_{\forall(x \in \mathbb{N}: x > x_{k})}^{(n)(l)}E(foo(x, b, \cdots, z)) \approx\frac{1}{x_{2} - x_{1}}\int_{x_{1}}^{x_{2}}F_x((x, b, \cdots, z)\partial x + \cdots + \frac{1}{x_{n} - x_{n-1}}\int_{x_{n-1}}^{x_{n}}F_x(x, b, \cdots, z)\partial x
$$
$$
+ \frac{1}{b_{2} - b_{1}}\int_{b_{1}}^{b_{2}}F_b((x, b, \cdots, z)\partial b + \cdots + \frac{1}{b_{n} - b_{n-1}}\int_{b_{n-1}}^{x_{n}}F_b(x, b, \cdots, z)\partial b 
$$
\begin{equation}
+ \cdots + \frac{1}{z_{2} - z_{1}}\int_{z_{0}}^{z_{1}}F_z((x, b, \cdots, z)\partial z + \cdots + \frac{1}{z_{n} - z_{n-1}}\int_{z_{n-1}}^{x_{n}}F_z(x, b, \cdots, z)\partial b
\end{equation}
This method can be simplified accordingly: 
\begin{equation}
\frac{1}{(n)(l)}\sum_{\forall(x \in \mathbb{N}: x > x_{k})}^{(n)(l)}E(foo(x, b, \cdots, z)) \approx\sum_{x_0}^{z_n}\sum_{\forall(\delta \in \mathbb{R}: x > \delta_{k})}^{n} \frac{1}{\delta_{k} - \delta_{j}}\int_{\delta_{j}}^{\delta_{k}}F(x, b, \cdots, z) \partial \delta
\end{equation}
\subsection{Segmented Quadratics to Construct Non-Polynomials}
The following subsection will discuss the mathematical applications of this method, and will focus on the proofs behind constructing non-polynomials as piece-wise functions built upon segmented quadratics. 
\begin{lemma}
Given $n$ values of $(x \in \mathbb{R})$ with corresponding $n$ values of $(y \in \mathbb{R})$ a representative polynomial $P$ can be constructed such that $\deg(P) < n \wedge P(x_k) = y_k$ 
\end{lemma}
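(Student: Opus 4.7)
The plan is to prove this classical interpolation result constructively, since the paper has already been manipulating exactly the Lagrange basis polynomials in equations (2.2) and (2.5). So I would simply formalize that construction and verify it does the required job.

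First I would explicitly write down, for each $j \in \{0, 1, \dots, n-1\}$, the candidate basis polynomial
\[
\ell_j(x) := \prod_{\substack{0 \leq m \leq n-1 \\ m \neq j}} \frac{x - x_m}{x_j - x_m},
\]
observing that $\ell_j$ is a product of $n-1$ linear factors in $x$, so $\deg(\ell_j) = n-1$. The denominators are nonzero because the nodes $x_0, \dots, x_{n-1}$ are assumed distinct (an implicit but necessary hypothesis that I would flag). The key computational step is the evaluation identity $\ell_j(x_i) = \delta_{ij}$: for $i = j$ every factor is $1$, and for $i \neq j$ the factor indexed by $m = i$ vanishes. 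This is identical to the relation asserted in equation (2.6) of the excerpt.

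Next I would set $P(x) := \sum_{j=0}^{n-1} y_j \, \ell_j(x)$. Since $P$ is a sum of polynomials each of degree $n-1$, we have $\deg(P) \leq n-1 < n$, giving the degree bound. Evaluating at any node $x_k$ and applying the identity above collapses the sum to the single surviving term $y_k \ell_k(x_k) = y_k$, which verifies the interpolation condition $P(x_k) = y_k$.

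For completeness I would append a one-line uniqueness argument, even though the lemma only asserts existence: if $P$ and $Q$ were two polynomials of degree less than $n$ both satisfying the interpolation data, then $P - Q$ would be a polynomial of degree less than $n$ vanishing at the $n$ distinct nodes $x_0, \dots, x_{n-1}$, forcing $P - Q \equiv 0$ by the fundamental theorem of algebra. No step here is a real obstacle; the only subtle point worth explicit mention is the distinctness of the nodes, without which the denominators in $\ell_j$ are undefined and the result fails. Everything else is direct verification using the machinery already invoked in Section 2.1.
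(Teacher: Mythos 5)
Your construction is exactly the one the paper uses: build the Lagrange basis polynomials $\ell_j$ of degree $n-1$ satisfying $\ell_j(x_i)=\delta_{ij}$ and set $P=\sum_j y_j\ell_j$, so the proof is correct and follows essentially the same route as the paper's own argument. Your explicit flagging of the distinct-nodes hypothesis and the appended uniqueness remark are worthwhile additions the paper omits, but they do not change the approach.
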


\begin{proof}[Proof of Lemma 2.2]
Let,
\begin{equation}
P_1(x) = \frac{(x - x_2)(x - x_3)\cdots(x - x_n)}{(x_1 - x_2)(x_1 - x_3)\cdots(x_1 - x_n)}
\end{equation}
Therefore, \begin{equation}
P_1(x_1) = 1 \wedge P_1(x_2) = P_1(x_3) = \cdots = P_1(x_n) = 0\end{equation}
Then evaluate, \begin{equation}
P_2, P_3, \cdots, P_n | P_j(x_j) = 1 \wedge P_j(x_i) = 0 \wedge P_j(x_i) = 0 \forall(i \neq j)\end{equation}
Therefore, $P(x) = \sum_{}^{}y_iP_i(x)$ is a constructed polynomial such that $\forall(x_i\in\mathbb{R}: \exists P(x_i)) \wedge \forall(i \in \mathbb{N}: i < n)$. It is built upon subsidiary polynomials of degree $n-1 \therefore \deg(P) < n$ 
\end{proof}
\begin{theorem}
Referencing Lemma 1, given any real non-polynomial, an approximate quadratic piece-wise function $F(x)$ can be constructed using $\frac{n}{2}$ segments produced by 3 values, defined over 2 values, of $x \in \mathbb{R}$ and their corresponding outputs such that $F(x)$ is continuous at all x values including respective transition points, but not necessarily differentiable at such values.
\end{theorem}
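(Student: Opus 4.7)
The plan is to build $F(x)$ piecewise by invoking Lemma 2.2 on successive triples of data points, then to verify (i) that each resulting segment is a well-defined quadratic, (ii) that adjacent segments agree at their shared node, and (iii) that the one-sided derivatives at a shared node are generically unequal. Given $n$ data points $(x_0,y_0),\ldots,(x_{n-1},y_{n-1})$ sampled from the target non-polynomial, I would, for each consecutive triple $(x_j,x_{j+1},x_{j+2})$, apply Lemma 2.2 to obtain a Lagrange polynomial $P_j$ with $\deg P_j \leq 2$ that interpolates the three associated ordinates. Averaging $P_j$ with the secant line between $(x_{j+1},y_{j+1})$ and $(x_{j+2},y_{j+2})$ as in equation (2.7) yields a second-degree polynomial $F_j$, which I take as the piece of $F$ on the subinterval $[x_{j+1},x_{j+2}]$. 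Stepping $j$ through the data produces the piecewise definition already displayed in equation (2.10), with the triples overlapping two-by-two so that only every other boundary is a ``new'' interpolation node.

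Next I would establish continuity at every transition point. At a shared endpoint $x_k$, both the left piece $F_{k-2}$ and the right piece $F_{k-1}$ must be evaluated at $x_k$. In the Lagrange summand from (2.7), only the basis function paired with $y_k$ survives at $x=x_k$, so that term contributes $y_k$; in the secant-line summand, the expression is anchored through $(x_k,y_k)$ and so also contributes $y_k$. Since both summands are averaged with weight $\tfrac{1}{2}$, each piece evaluates to $y_k$ at the common node, giving $\lim_{x\to x_k^-}F(x) = F(x_k) = \lim_{x\to x_k^+}F(x)$. Continuity on the whole domain then follows, since the interior of each segment is polynomial and therefore continuous automatically.

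For non-differentiability at the transition nodes, I would simply compute the one-sided slopes $\lim_{x\to x_k^-}F'_{k-2}(x)$ and $\lim_{x\to x_k^+}F'_{k-1}(x)$ from the explicit quadratic forms. Each slope is a fixed linear combination of three ordinates drawn from a different triple of nodes, together with the respective secant slope contribution; except when the sampled data themselves happen to lie on a single quadratic (a coincidence that defeats the premise that the target is non-polynomial over that window), the two slopes differ. This produces the corner in $F$ at $x_k$ and justifies the ``not necessarily differentiable'' clause.

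The main obstacle, I anticipate, is reconciling the $\tfrac{n}{2}$ segment count in the statement with the indexing used in (2.10): the piecewise formula there appears to generate roughly $n-1$ pieces if every interior node is a transition, so I would need to either restrict attention to disjoint triples (making every other node a transition, which gives exactly $\tfrac{n}{2}$ pieces) or interpret ``$\tfrac{n}{2}$ segments'' as an approximate count reflecting that two data intervals are absorbed into each quadratic fit. Once this bookkeeping is fixed, both continuity and the failure of differentiability reduce to direct substitutions in (2.7), and the interpolation property guaranteed by Lemma 2.2 does essentially all of the work.
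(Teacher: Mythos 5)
Your proposal follows essentially the same route as the paper's proof: build each piece by averaging the Lemma~2.2 (Lagrange) quadratic with its secant line, invoke polynomial continuity on the interior of each segment, check that adjacent pieces agree at shared nodes, and observe that the one-sided derivatives need not match there. In fact your version is somewhat more careful than the paper's, since you explicitly verify by substitution that both the Lagrange summand and the secant summand return $y_k$ at a transition node (a step the paper merely asserts), and you flag the $\tfrac{n}{2}$ versus $n-1$ segment-count bookkeeping that the paper leaves unaddressed.
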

\begin{proof}[Proof of Theorem 2.3]
Since the initial portion of the polynomial is based upon Lemma 1, it is clear that 3 base points will construct a quadratic polynomial, unless their respective derivatives are equivalent which would produce a sloped line. The following method is shown: 
\begin{equation}
    F(x) = \begin{cases} 
       \frac{1}{2}\sum _{0}^{2}y_{j}\prod _{\begin{smallmatrix}0\leq m\leq 2\\m\neq 0\end{smallmatrix}}{\frac {x-x_{m}}{x_{0}-x_{m}}} + \frac{1}{2}\left[\frac{y_2 - y_{1}}{x_{2} - x_{1}}(x-x_{2}) + y_2\right] & x_1 \leq x \leq x_2 \\
      \frac{1}{2}\sum _{1}^{3}y_{j}\prod _{\begin{smallmatrix}1\leq m\leq 3\\m\neq 1\end{smallmatrix}}{\frac {x-x_{m}}{x_{2}-x_{m}}} + \frac{1}{2}\left[\frac{y_3 - y_{2}}{x_{3} - x_{2}}(x-x_{3}) + y_3\right] & x_2 \leq x \leq x_3 \\
        \cdots &   \cdots  \\
    \frac{1}{2}\sum _{n-2}^{n}y_{j}\prod _{\begin{smallmatrix}n-2\leq m\leq n\\m\neq n-2\end{smallmatrix}}{\frac {x-x_{m}}{x_{n-2}-x_{m}}} + \frac{1}{2}\left[\frac{y_n - y_{n-1}}{x_{n} - x_{n-1}}(x-x_{n}) + y_n\right] & x_{n-2} \leq x \leq x_n \\
   \end{cases}
\end{equation}
When simplified, the function would be defined accordingly: 
\begin{equation}
 F(x) = \begin{cases} 
 a_0x^2+b_0x+c_0 & x_1 \leq x \leq x_2 \\
 a_1x^2+b_1x+c_1 & x_2 \leq x \leq x_3 \\
 \cdots & \cdots\\
  a_2x^2+b_2x+c_2 & x_n-1 \leq x \leq x_n \\
 \end{cases}
 \end{equation}
 By definition any polynomial is continuous throughout it's designated bounds\cite{Cucker}, therefore, for all values within each segment, $F(x)$ is continuous. And, since the bounded values of each segment are equivalent, we can conclude that the produced function is continuous everywhere. Formally $(\lim_{x \to t^{-}} F(x) = \lim_{x \to t^{+}} F(x) = F(x))$ where $t$ is any bounded point. However, this does not guarantee that $(\lim_{x \to t^{-}} \frac{\partial F(x)}{\partial x} = \lim_{x \to t^{+}} \frac{\partial F(x)}{\partial x}$; therefore it's derivative at the bounded point is likely undefined. 
\end{proof}
\begin{theorem}
Given any segmented quadratic $F(x) = ax^2 + bx + c$ constructed through averaging Lagrangian Interpolation with it's respective secant, the graphical concavity can be determined by looking at the sign of variable $a$ such that $(a \in \mathbb{R}) \wedge (b \in \mathbb{R}) \wedge (c \in \mathbb{R})$. 
\end{theorem}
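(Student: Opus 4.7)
The plan is to reduce the concavity question for each segment to the standard calculus criterion based on the sign of the second derivative, and then observe that for a quadratic this second derivative is a constant directly determined by the coefficient $a$. Since each segment of the piece-wise function, by Theorem~2.3, simplifies to the form $F(x) = ax^2 + bx + c$ with real coefficients, I may treat one fixed segment at a time and the conclusion will then transfer segment-by-segment to the full piece-wise construction.

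First, I would compute $\frac{\partial F}{\partial x} = 2ax + b$ and $\frac{\partial^2 F}{\partial x^2} = 2a$ directly from the simplified segment form. Next, I would invoke the standard second-derivative test for concavity: a twice-differentiable function is concave up on an interval where its second derivative is positive and concave down where it is negative. Applying this to $F''(x) = 2a$, which is a constant, gives three cases to state: if $a > 0$ then $F''(x) > 0$ for all $x$ in the segment, so $F$ is concave up; if $a < 0$ then $F''(x) < 0$ and $F$ is concave down; and if $a = 0$ the expression degenerates to a linear function $bx + c$, whose concavity is trivially zero and which is excluded from being a genuine quadratic segment unless the three defining data points happen to be collinear (a possibility already acknowledged in the proof of Theorem~2.3).

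The only subtle point, and what I expect to be the main obstacle, is verifying that the averaging procedure of Lagrangian interpolation with its secant line truly yields a real coefficient $a$ whose sign is determined by the underlying data rather than by some artefact of the construction. To address this I would trace through the simplification already carried out in equations (2.7) and (2.10): since the Lagrange piece on three points $(x_0,y_0),(x_1,y_1),(x_2,y_2)$ contributes the quadratic term $\tfrac{1}{2}\bigl[\tfrac{y_0}{(x_0-x_1)(x_0-x_2)} + \tfrac{y_1}{(x_1-x_0)(x_1-x_2)} + \tfrac{y_2}{(x_2-x_0)(x_2-x_1)}\bigr]x^2$ while the secant contributes no $x^2$ term, the coefficient $a$ is exactly half of the second divided difference of the data, which is manifestly real whenever the $y_j$ and $x_j$ are real and distinct.

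Finally, because each segment of the piece-wise $F$ is an independent quadratic on its own closed interval, the concavity classification above applies segment-wise. Thus the full piece-wise function inherits a well-defined concavity on the interior of every segment, determined entirely by the sign of that segment's leading coefficient $a$, which completes the argument. I would close by noting that this does not extend across segment boundaries, consistent with the earlier observation that $F$ is only guaranteed to be continuous, not differentiable, at the transition points.
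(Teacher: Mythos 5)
Your proposal is correct and follows essentially the same route as the paper: both arguments reduce to observing that $\frac{\partial^2 F}{\partial x^2} = 2a$ is constant on each segment, so the sign of $a$ alone fixes the concavity there. You add two details the paper omits --- the explicit identification of $a$ as half the second divided difference (confirming the secant term contributes nothing to the $x^2$ coefficient) and the degenerate collinear case $a=0$ --- but these are refinements of the same argument rather than a different method.
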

\begin{proof}[Proof of Theorem 2.4]
Since $F(x)$ constructed with three base points, the only polynomial function produced are segmented quadratics. Upward concavity exists $\forall(x \in (x_j, x_k)|\frac{\partial^2}{\partial x^2} > 0$;  while downward concavity exists $\forall(x \in (x_j, x_k)|\frac{\partial^2}{\partial x^2} < 0$. However, this process can be expedited without the need to compute second derivatives. 
\newline 
\newline
Since our segmented polynomial is constructed using only three points we can conclude that, 
\begin{equation}
\forall(x \in (x_j, x_k)): \{F_x(x) = ax^2 + bx + c \} | \{(a \in \mathbb{R} : a > 0) \wedge (b \in \mathbb{R} : b > 0) \wedge (c \in \mathbb{R} : c > 0)\}
\end{equation}
Therefore, 
\begin{equation}
\left.\frac{\partial^2 F}{\partial x^2}\right|_{\forall(x \in (x_j, x_k))} = 2a \therefore \iint{2a}\partial x\partial x = F_x(x) \therefore \int_{x_j}^{x_k}\frac{\partial^2 F}{\partial x^2} = \frac{a}{|a|}\left|\int_{x_j}^{x_k}\frac{\partial^2 F}{\partial x^2}\right|
\end{equation}
Thus, the sign of $a$ is the only significant value to determine the segmented concavity $F_x(x)$. Determining the functional concavity of the Lagrangian construction is important, as with certain functions, primarily those with upward concavity, it may not be necessary to compute secant line averages.  
\end{proof}
When testing the mathematical accuracy $A$ of our approach, we use the segmented average value and compare it to that of the original function $G(x)$. In cases where $\int_{a}^{b}G(x) > \int_{a}^{b}F(x)$ we simply compute the reciprocal of the following function. We use $a$ and $b$ as placeholder variables to represent the segmented bounds.
\begin{equation}
     A = \frac{\sum_{0}^n\int_{a}^{b}G(x)}{\sum_{0}^{n}\int_{a}^{b}F(x)}
\end{equation}

\section{Results}
We divide the results section into the primary perspective (algorithmic implications) and the secondary perspective (pure mathematical implications).  
\subsection{Algorithmic Test Results}
We tested our method on four algorithms (two single variable functions and two multivariable functions) and compared the produced time complexity formulas to the true, known, complexities to see how accurate our formulations were, and the extremity of deviations, if, at all. The first algorithm (single variable) was a binary search of $x$ elements, with a complexity of $\mathcal{O}(\log x)$. The second  (single variable) was a sort of $n$ elements, with a complexity of $\mathcal{O}(x\log x)$. The third  (multivariable) was a combined search sort algorithm of $n$ unsorted elements, and $b$ select sorted elements, with a complexity of $\mathcal{O}(b + \log x)$. The fourth  (multivariable) was a custom algorithm of $n$ elements, with a complexity of $\mathcal{O}(mx + \log\log x)$. Although coefficients and additional constants are implemented in the predicted complexity, due to time being the only output variable, the only relevant component is the contents of the big-$O$ as it represents the asymptotic behavior of the algorithmic runtime regardless of any confounding variables. For multivariable algorithms, like stated in the methods, runtime complexities were computed with respect to each variable, and put together in the form of the final predicted complexity. 
\begin{table}[H]
\caption{Predicted Runtime Functions}
\centering
\begin{adjustbox}{width=\columnwidth,center}
\begin{tabular}{cccc}
\toprule
\textbf{Complexity}	& \textbf{Constructed Polynomial}	& \textbf{Predicted Complexity}\\
\midrule
 $\mathcal{O}(\log{x})$	&	 $ F(x) = \begin{cases} 
 -\frac{17}{2400000}x^2+\frac{203}{120000}x+\frac{1947}{6400} & 45 \leq x \leq 95 \\
-\frac{3}{2200000}x^2+\frac{149}{220000}x+\frac{6133}{17600} & 95 \leq x \leq 205 \\
 -\frac{151}{574200000}x^2+\frac{15289}{57420000}x+\frac{270373}{696000} & 205 \leq x \leq 385 \\
 \end{cases}$ & $\frac{1}{11}\mathcal{O}(\log{x}) + 0.22$\\\\
$\mathcal{O}(x\log{x})$ &	 $ F(x) = \begin{cases} 
 \frac{19}{625000}x^2+\frac{4}{3125}x+\frac{69}{500} & 50 \leq x \leq 100 \\
  \frac{3}{312500}x^2+\frac{123}{25000}x-\frac{9}{500} & 100 \leq x \leq 125 \\
 \frac{19}{625000}x^2+\frac{4}{3125}x+\frac{69}{500} & 125 \leq x \leq 150 \\
 \end{cases}$ & $\frac{1}{350}\mathcal{O}(x\log{}x)$\\\\
 $\mathcal{O}(mx\log{x})$ &	$\begin{cases} F_x(x, m) = \begin{cases} 
 \frac{19}{625000}x^2+\frac{4}{3125}x+\frac{69}{500} & 50 \leq x \leq 100 \\
  \frac{3}{312500}x^2+\frac{123}{25000}x-\frac{9}{500} & 100 \leq x \leq 125 \\
 \frac{19}{625000}x^2+\frac{4}{3125}x+\frac{69}{500} & 125 \leq x \leq 150 \\
 \end{cases}\\ F_m(x, m) = \begin{cases} 
 1892m & 0 \leq m \leq 10 \\
 1837m - 33461 & 10 \leq m \leq 20 \\
 2066m - 56184 & 20 \leq m \leq 30 \\
 \end{cases} \end{cases}$	& $\frac{1159}{3750}\mathcal{O}(mx\log{x})$	\\\\
 $\mathcal{O}(mx+\log\log{b})$			& $\begin{cases} F_x(x, m, b) = \begin{cases} 
 \frac{19}{625000}x^2+\frac{4}{3125}x+\frac{69}{500} & 50 \leq x \leq 100 \\
  \frac{3}{312500}x^2+\frac{123}{25000}x-\frac{9}{500} & 100 \leq x \leq 125 \\
 \frac{19}{625000}x^2+\frac{4}{3125}x+\frac{69}{500} & 125 \leq x \leq 150 \\
 \end{cases} \\ F_m(x, m, b) = \begin{cases} 
 1892m & 0 \leq m \leq 10 \\
 1837m - 33461 & 10 \leq m \leq 20 \\
 2066m - 56184 & 20 \leq m \leq 30 \\
 \end{cases}\\F_b(x, m, b) = \begin{cases} 
-\frac{13}{12000000}x^2 + \frac{863}{1200000}x + \frac{138949}{160000} & 5 \leq x \leq 305 \\
-\frac{3}{8000000}x^2 + \frac{363}{800000}x + \frac{283677}{320000} & 305 \leq x \leq 505 \\
  -\frac{1}{15000000}x^2 + \frac{51}{250000}x + \frac{560389}{600000} & 505 \leq x \leq 705 \\
 \end{cases} \end{cases}$	& $\mathcal{O}(mx\log{}x+\log\log{b}) + 0.06$\\
\bottomrule
\end{tabular}
\end{adjustbox}
\end{table}
\subsection{Mathematical Test Results}
We tested our method against various non-polynomial functions, and selected one example of each common type of non-polynomial to be representative of the accuracy with it's functional family. We made sure to remove any non-composite components such as any constants, as that would only adjust the function by a graphical translation. The functions used were $\log_2x$ (logarithm family), $\frac{x-1}{x}$ (rational family), $2^x$ (exponential family), and $4cos(x)$ (trigonometric family); although we could choose more convoluted functions, we wanted to showcase performance for functions that are similar to their parent functions to attain a holistic perspective. In most cases, the Lagrangian constructions tend to exceed the vertical level of their respective non-polynomials, making secant line averages most useful with downward concavity. We defined our piece wise function until some relatively close, arbitrary whole value that leaves numbers simple, to stay consistent; however, the accuracy is still indicative of potential performance regardless of the final bound, due to the natural progression of such functions. 
\begin{table}[H]
\caption{Accuracy of Constructed Polynomials Compared to Base Function}
\centering
\begin{tabular}{ccc}
\toprule
\textbf{Function}	& \textbf{Constructed Polynomial} & \textbf{Calculated Accuracy}\\
\midrule
 $\log_2x$	&	 $ F(x) = \begin{cases} 
 -\frac{1}{196}x^2+\frac{1}{4}x+\frac{4}{3} & 8 \leq x \leq 16 \\
-\frac{1}{768}x^2+\frac{1}{8}x+\frac{7}{3} & 16 \leq x \leq 32 \\
 -\frac{1}{3072}x^2+\frac{1}{16}x+\frac{10}{3} & 32 \leq x \leq 64 \\
 \end{cases}$		& $99.964\%$\\\\
$\cos(\pi{}x)$ &	 $ F(x) = \begin{cases} 
 -\frac{414}{125}x^2-\frac{43}{125}x+1 & 0 \leq x \leq 0.5 \\
 \frac{414}{125}x^2-\frac{871}{125}x+ \frac{332}{125} & 0.5 \leq x \leq 1 \\
\frac{414}{125}x^2-\frac{157}{25}x+ \frac{246}{125} & 1 \leq x \leq 1.5 \\
 \end{cases}$	& $99.79\%$\\\\
$2^x$ &	$ F(x) = \begin{cases} 
 2x^2-6x+8 & 3 \leq x \leq 4 \\
 4x^2-20x+32 & 4 \leq x \leq 5 \\
  8x^2-56x+112 & 5 \leq x \leq 6 \\
 \end{cases}$			& $98.92\%$\\\\
$\frac{x-1}{x}$			& $F(x) = \begin{cases} 
 -\frac{1}{16}x^2+\frac{1}{2}x+\frac{1}{4} & 2 \leq x \leq 4 \\
-\frac{1}{128}x^2+\frac{1}{8}x+\frac{3}{8} & 4 \leq x \leq 8 \\
  -\frac{1}{1024}x^2+\frac{1}{32}x+\frac{11}{16} & 8 \leq x \leq 16 \\
 \end{cases}$		&	$99.34\%$\\
\bottomrule
\end{tabular}
\end{table}

\begin{figure}[H]
  \centering
  \subfloat[Without Partitions]{\includegraphics[width=0.32\textwidth]{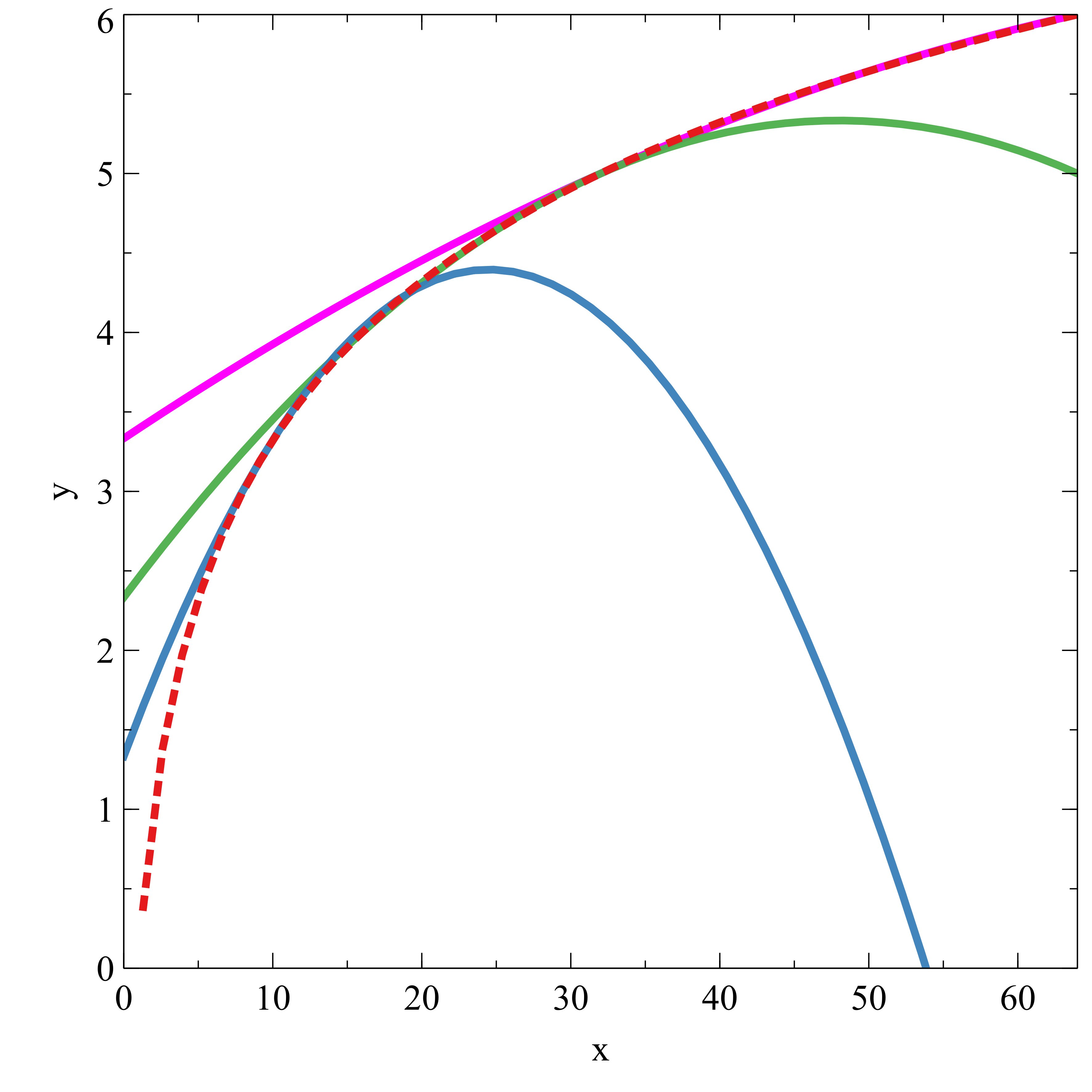}\label{fig:f1}}
  \hfill
  \centering
  \subfloat[With Partitions]{\includegraphics[width=0.32\textwidth]{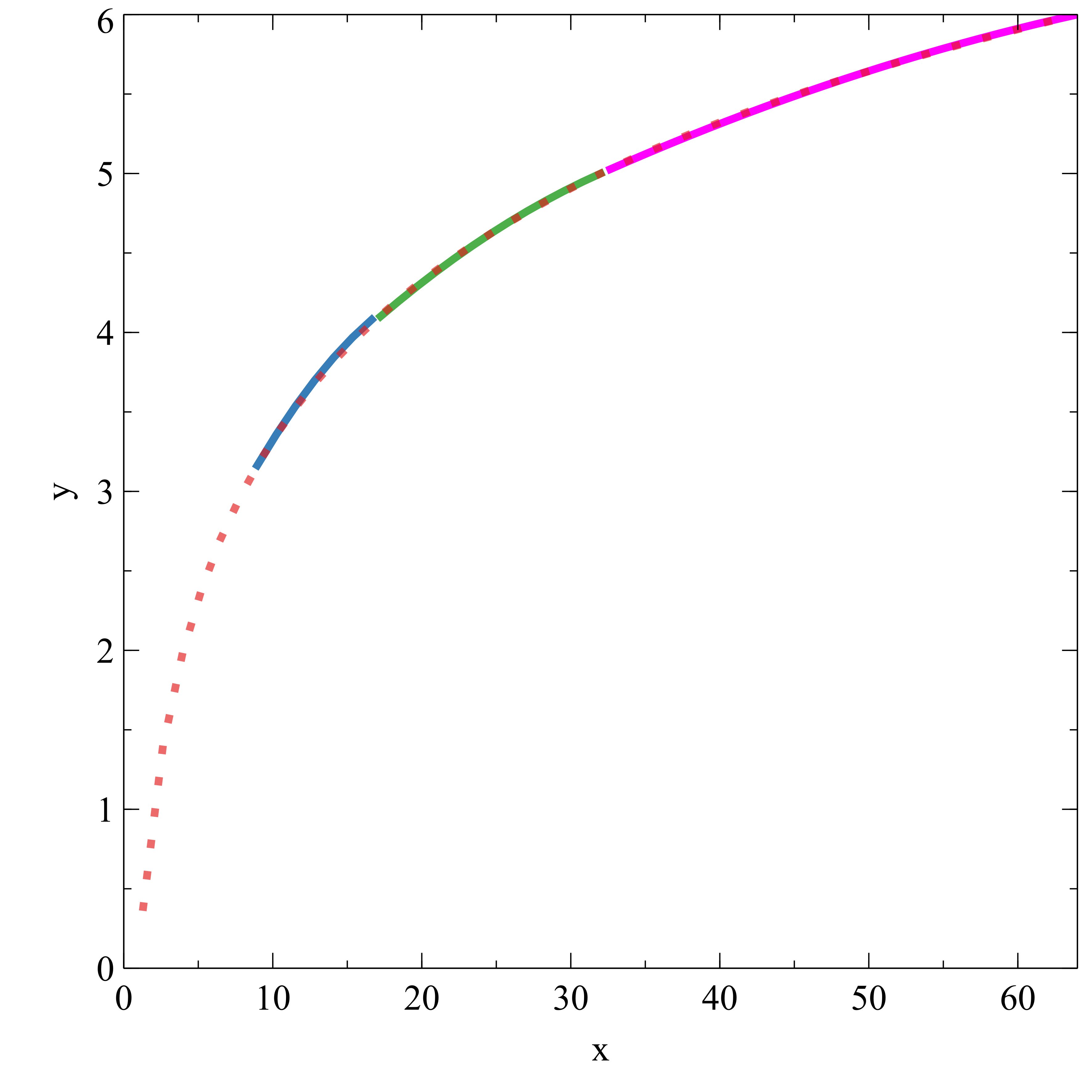}\label{fig:f2}}
  \caption{Graphical Representations of $\log_2x$ and $F(x)$}
\end{figure}

\begin{figure}[H]
  \centering
  \subfloat[Without Partitions]{\includegraphics[width=0.32\textwidth]{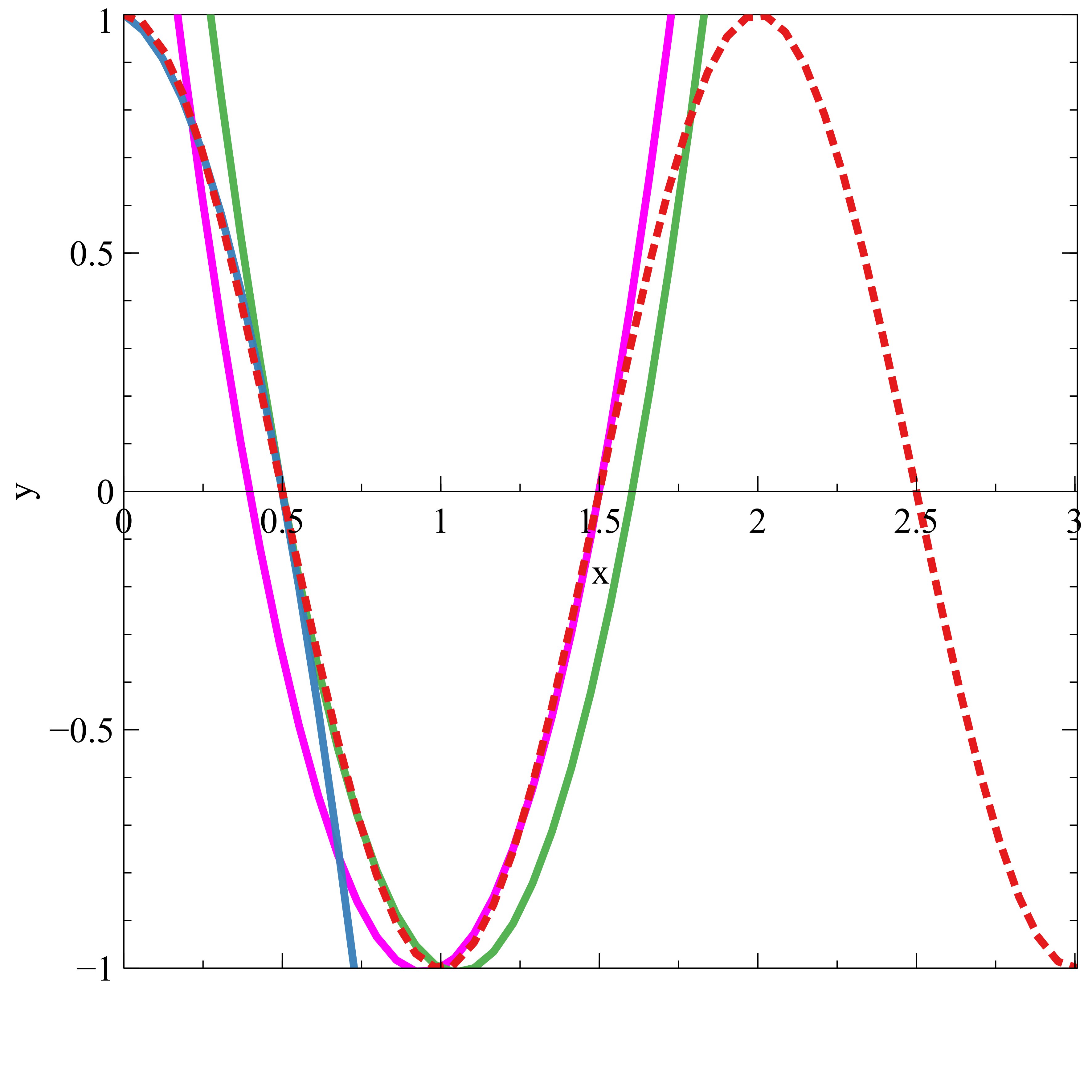}\label{fig:f1}}
  \hfill
  \centering
  \subfloat[With Partitions]{\includegraphics[width=0.32\textwidth]{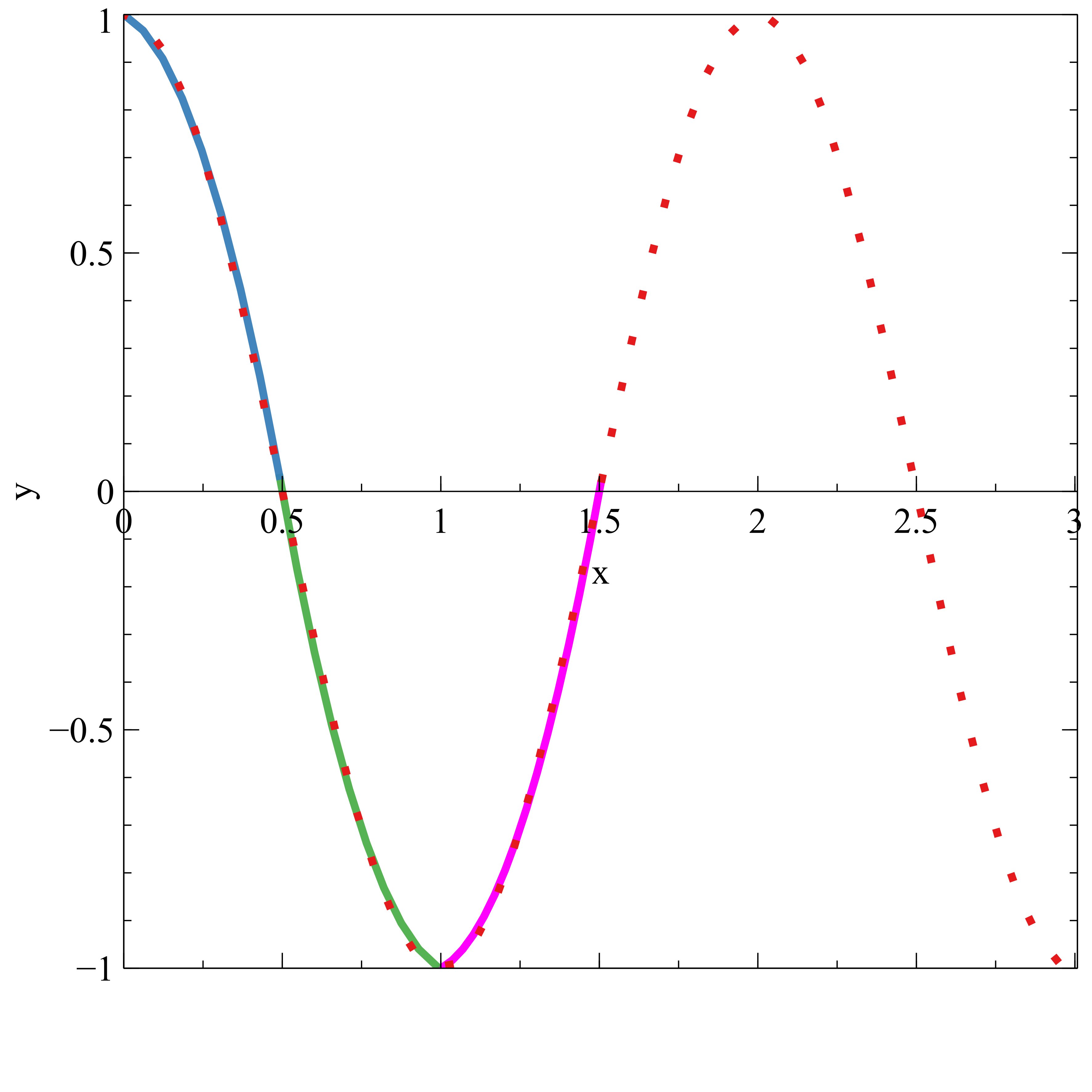}\label{fig:f2}}
  \caption{Graphical Representations of $\cos(\pi{}x)$ and $F(x)$}
\end{figure}
\begin{figure}[H]
  \centering
  \subfloat[Without Partitions]{\includegraphics[width=0.32\textwidth]{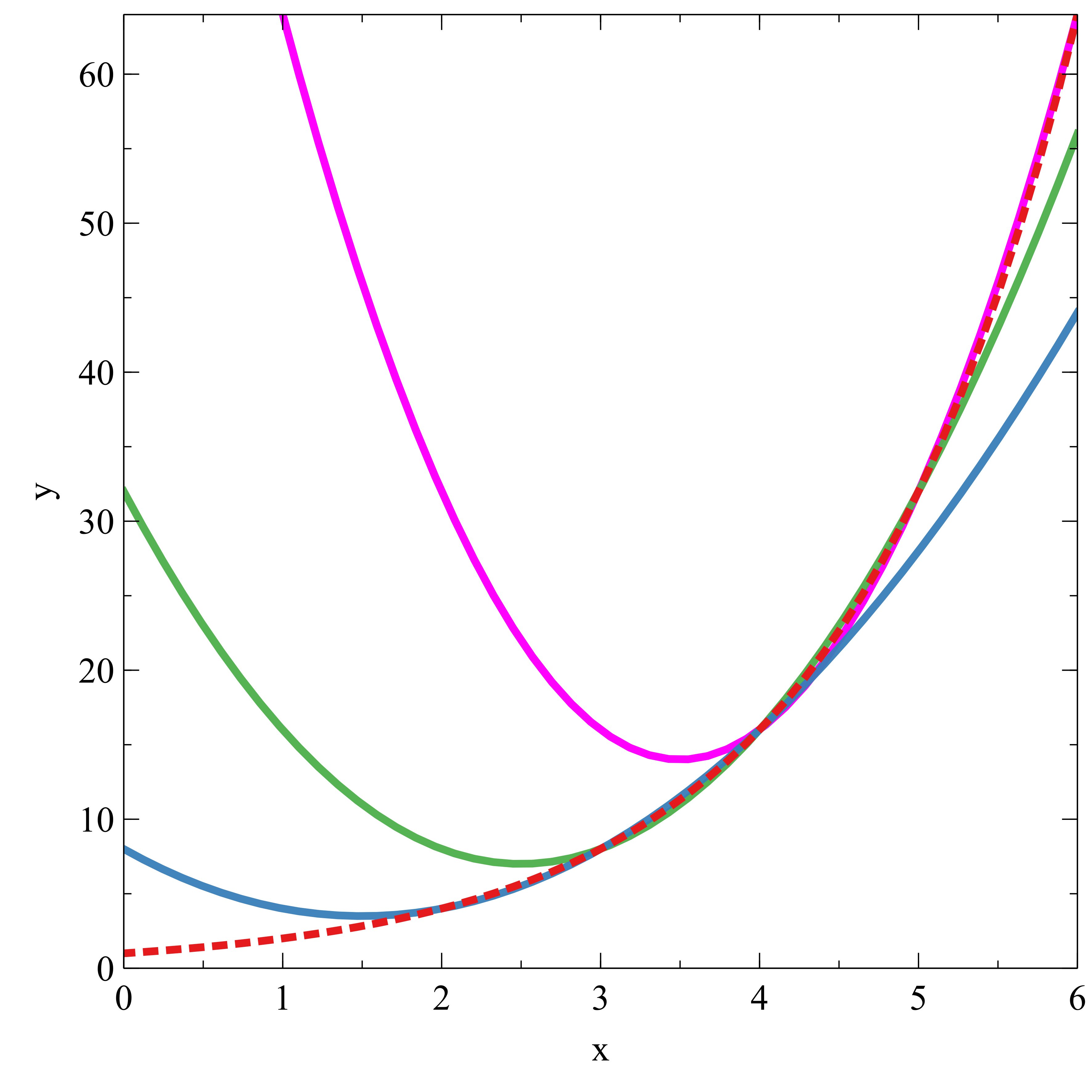}\label{fig:f1}}
  \hfill
  \centering
  \subfloat[With Partitions]{\includegraphics[width=0.32\textwidth]{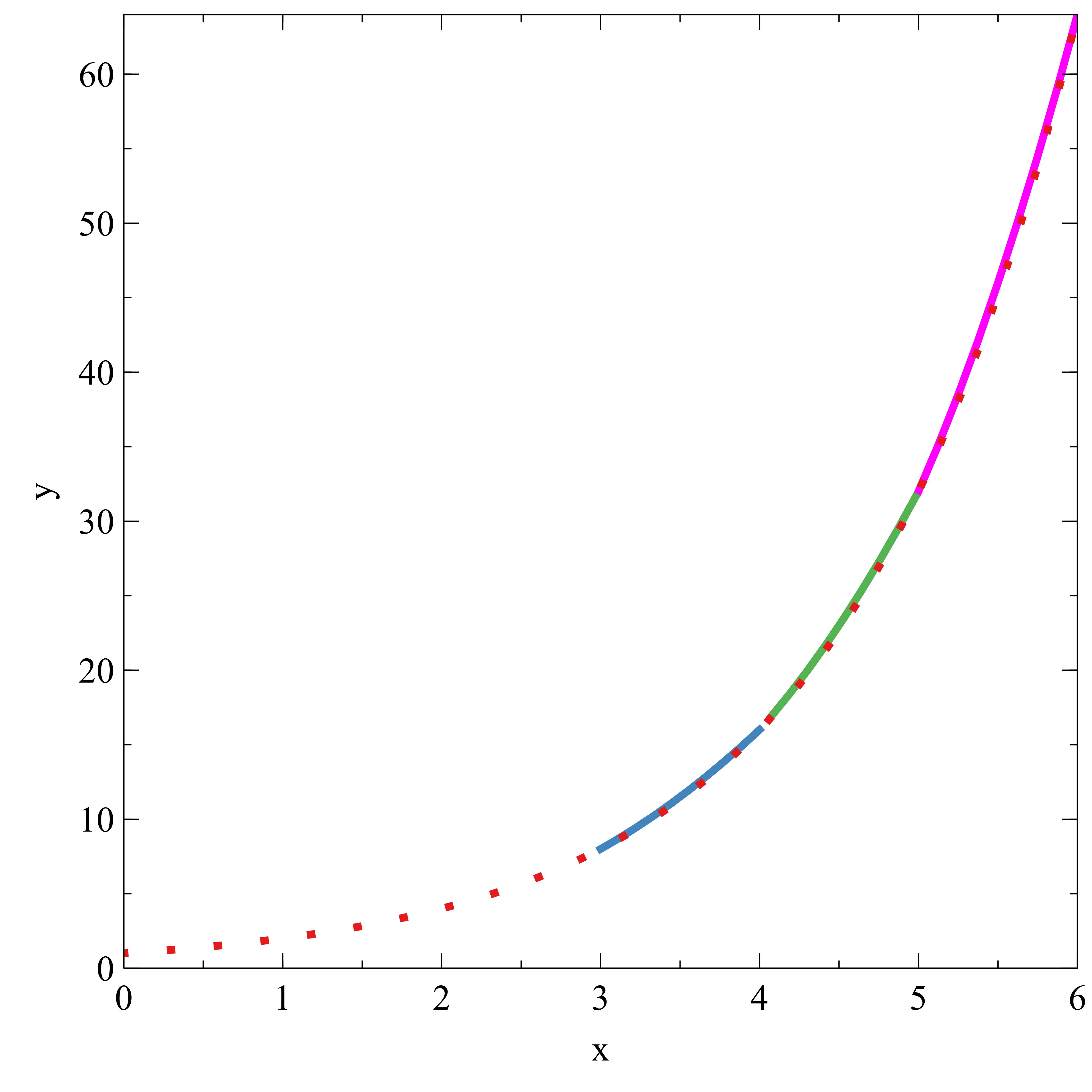}\label{fig:f2}}
  \caption{Graphical Representations of $2^x$ and $F(x)$}
\end{figure}
\begin{figure}[H]
  \centering
  \subfloat[Without Partitions]{\includegraphics[width=0.32\textwidth]{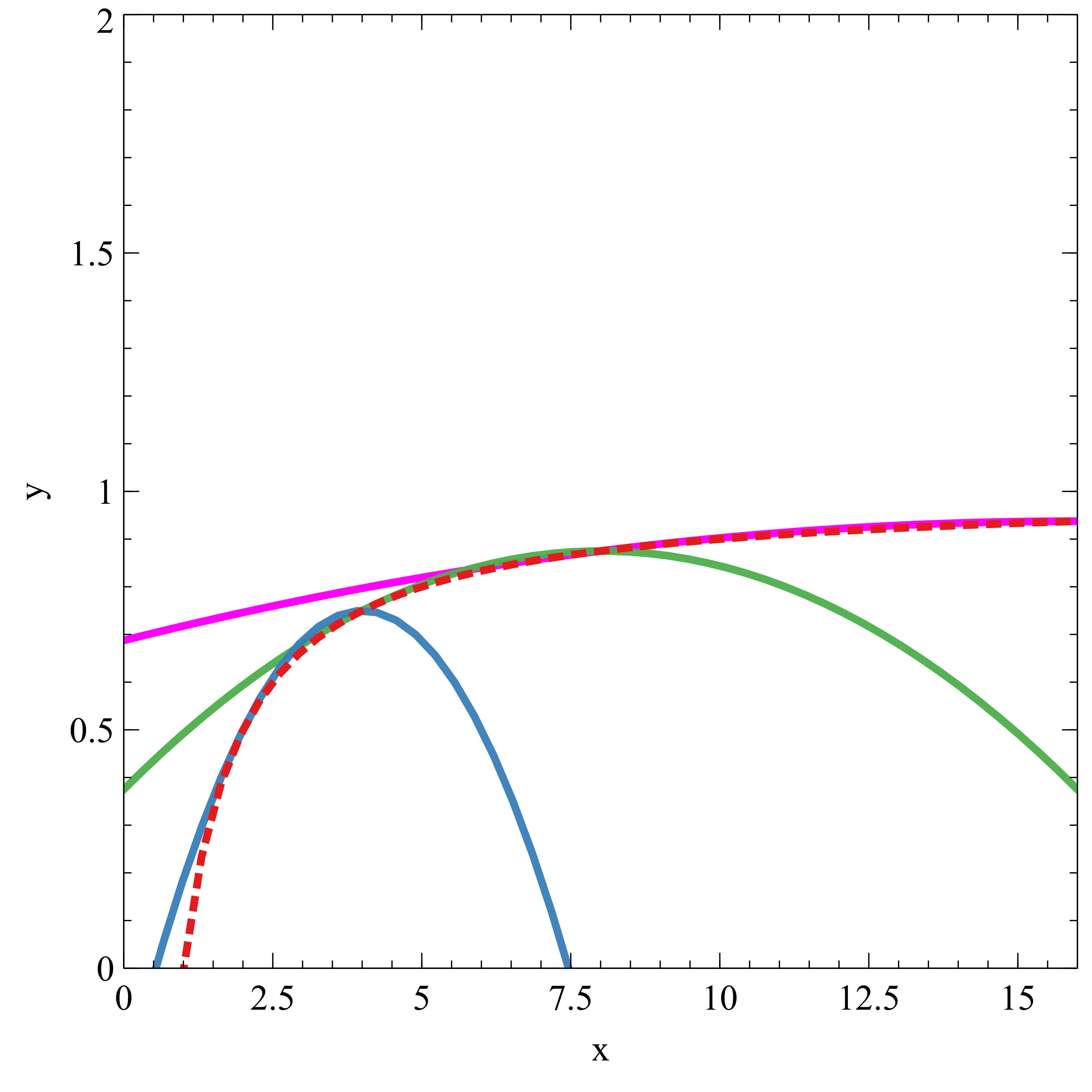}\label{fig:f1}}
  \hfill
  \centering
  \subfloat[With Partitions]{\includegraphics[width=0.32\textwidth]{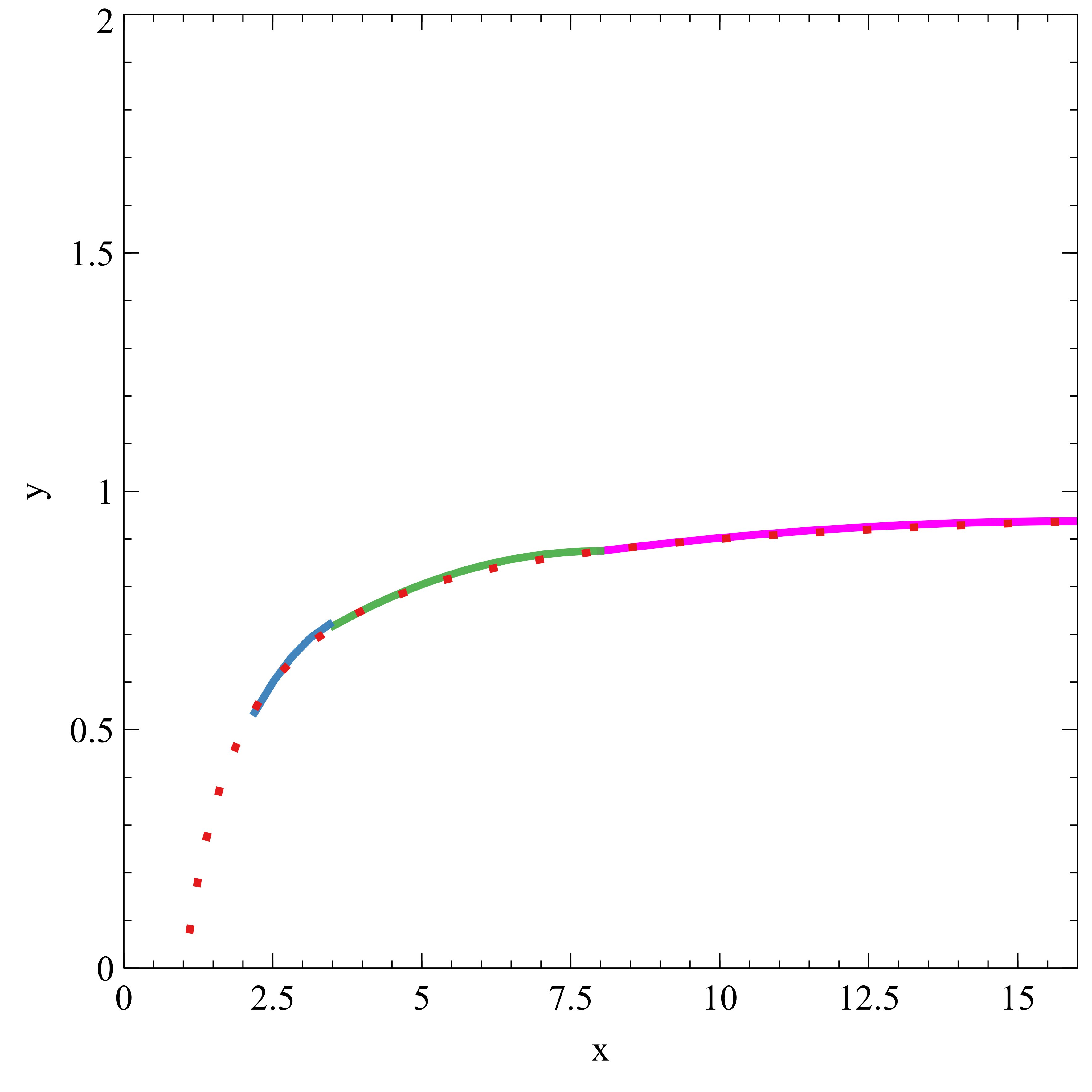}\label{fig:f2}}
  \caption{Graphical Representations of $\frac{x-1}{x}$ and $F(x)$}
\end{figure}

\section{Discussion and Implications}

\subsection{Algorithmic Discussion}
After testing the proposed approach against several known algorithms, we were able to swiftly determine the runtime functions that correspond with their true time complexities. We tested the approach on two single variable algorithms and two multivariable algorithms such that we could compare the produced complexity behaviour with the true, known, complexity. In practice, this method will be used on algorithm's where complexities are unknown to help determine their runtime functions; however, experimentally, we needed to know the true complexity beforehand to deduce the comparative accuracy. Regardless, in all cases our method was able to produce the correct big-$O$ runtime function. This was determined through the automated construction of segmented polynomial models given a set of input data. By treating each variable independently and graphing their grouped correlation, it made it easy to deduce the respective time complexity. To reiterate, any external coefficients and constants are a result of the particular test environment and because time is the output value. The only relevant component in determining the accuracy of the method are the contents of the big-O function. Most of the predicted runtime complexity functions followed the format of $k\mathcal{O}(T(n)) + C$ where $k$ is the constant of proportionality between execution time and standard time complexity and $C$ is any factor of translation that matches the produced graphical curve/line with their true counterpart. While these values help us overlay our constructions with their parent functions, they aren't necessarily important in determining the accuracy of our approach as the asymptotic behavior of our construction will be the same regardless. We are confident that the proposed method can significantly help expedite the process of determining functional time complexities in all cases, including both single and mulitvariable algorithms. 
\subsection{Mathematical Discussion}
After reviewing the results, we were able to confirm the accuracy of the proposed approach with constructing matching segmented differentiable quadratics given any non-polynomials. These include logarithmic, exponential, trigonometric, and rational functions. To determine the approaches accuracy with select functions, we calculated the average value of the formulated function over a particular segment. And after doing so, as well as reviewing the formulaic relationships between computed segments, we found a collective functional resemblance score of greater than $99\%$ and began to notice profound mathematical implications. After testing just a few data points, we can produce a rule that can construct the next consecutive segmented polynomial based upon the functional patterns that surface. For example, with regard to $\log_2x$, we were able to determine that every consecutive segment was equivalent to $\frac{a}{4}x^2 + \frac{b}{2}x + (c+1)$ such that the variables are the constants of the previous segment and the accuracy of any additional segments would remain identical. Not only is this a revolutionary method for accurate, polynomial replications; but it's sheer simplicity combats flaws found in leading methods of doing so (primarily with non-sinusoidal functions), most notably Taylor series. Using these methods mathematicians and scientists can construct accurate, differentiable functions to represent patterned data, non-polynomials, and functions found in higher theoretical dimensions. Additionally, a similar approach can be used to determine the natural progression of repetitious systems such as natural disasters, planetary orbits, or pandemic-related death tolls, to lead to a better understanding of their nature. As in theory, their physical attributes and properties are built upon reoccurring, natural functions. 
\subsection{Conclusion}
In this paper we proposed an approach to use segmented quadratic construction, based upon the principles of Lagrangian Interpolation to help determine algorithmic runtimes, as well as model non-polynomials with advanced, foreseeable applications in pure mathematics and pattern modeling/recognition found in science and nature. We hope to build upon this approach by improving and determine new ways to apply this research in all computational and mathematical based fields. 

\section*{Acknowledgments}
I would like to thank Professor Jeffery Ullman, Mr. Sudhir Kamath, Mr. Robert Gendron, Mr. Phillip Nho, and Ms. Katie MacDougall for their continual support with my research work.



\end{document}